\def\tr{\operatorname{tr}}
\renewcommand{\(}{\left(}
\renewcommand{\)}{\right)}
\newcommand{\Eqref}[1]{Eq.~\eqref{#1}}
\newcommand{\secref}[1]{Sec.~(\ref{#1})}
\newcommand{\figref}[1]{Fig.~(\ref{#1})}
\theoremstyle{plain}
\newtheorem{thm}{Theorem}
\newtheorem{lem}{Lemma}
\theoremstyle{definition}
\newtheorem{defn}{Definition}
\newtheorem{rem}{Remark}
\begin{document}

\preprint{APS/123-QED}

\title{Entanglement of Purification in Random Tensor Networks}

\author{Chris Akers}
\email{cakers@mit.edu}
\affiliation{Center for Theoretical Physics,\\
Massachusetts Institute of Technology, Cambridge, MA 02139, USA
}%
\author{Thomas Faulkner}%
 \email{tomf@illinois.edu}
\author{Simon Lin}
\email{shanlin3@illinois.edu}
\affiliation{Department of Physics, University of Illinois,\\ 1110 W. Green St., Urbana, IL 61801-3080, USA
}%
\author{Pratik Rath}
\email{rath@ucsb.edu}
\affiliation{Department of Physics, University of California,\\
Santa Barbara, CA 93106, USA
}%

\begin{abstract}
The entanglement of purification $E_P(A\colon  B)$ is a powerful correlation measure, but it is notoriously difficult to compute because it involves an optimization over all possible purifications. 
In this paper, we prove a new inequality: $E_P(A\colon  B)\geq \frac{1}{2}S_R^{(2)}(A\colon  B)$, where $S_R^{(n)}(A\colon  B)$ is the Renyi reflected entropy. 
Using this, we compute $E_P(A\colon  B)$ for a large class of random tensor networks at large bond dimension and show that it is equal to the entanglement wedge cross section $EW(A\colon  B)$, proving a previous conjecture motivated from AdS/CFT. 
\end{abstract}

\maketitle

\section{Introduction}
Given a bipartite density matrix $\rho_{AB}$, the entanglement of purification $E_P(A\colon  B)$ is defined as \cite{terhal2002entanglement}
\begin{equation}
E_P(A\colon  B) = \min_{\ket{\psi}_{ABA'B'}} S(AA'),
\end{equation}
where $S(R)=-\tr \left(\rho_R \log \rho_R \right)$ is the von Neumann entropy. 
The minimization runs over all possible purifications of $\rho_{AB}$, i.e., $\ket{\psi}_{ABA'B'}$ such that $\tr_{A'B'} \left(\ket{\psi} \bra{\psi} \right)=\rho_{AB}$, and the $\ket{\psi}$ that achieves the minimum is called the optimal purification. 
$E_P(A\colon  B)$ is a useful measure of correlations in a bipartite mixed state and is proven to be monotonic under local operations \cite{terhal2002entanglement}. 
However, it is generally intractable to compute because of the optimization over all possible purifications \footnote{Exceptions to this include pure states like Bell pairs and classically correlated states like GHZ states, see Ref.~\cite{Nguyen:2017yqw} for details.}.

In the context of AdS/CFT \footnote{See Ref.~\cite{Harlow:2018fse} for a review of the quantum information perspective on AdS/CFT.}, it has been conjectured that for $A, B$ subregions of the CFT, there is a simple geometric, AdS dual to $E_P(A:B)$.
The entanglement wedge of subregion $AB$ of the CFT is the bulk region between $AB$ and the minimal surface $\gamma_{AB}$ (also called the Ryu-Takayanagi (RT) surface \cite{Ryu:2006bv}).
This is, in appropriate settings, the bulk region reconstructable from the corresponding boundary subregion \cite{Dong:2016eik}.
Based on this, Refs.~\cite{Takayanagi:2017knl,Nguyen:2017yqw} conjectured that $E_P(A\colon  B)$ is given by
\begin{equation}\label{eq:conj}
    E_P(A\colon B)=EW(A\colon B)=\frac{\text{Area}(\Gamma_{A:B})}{4G_N},
\end{equation}
where $\Gamma_{A:B}$ is the entanglement wedge cross section, the minimal surface dividing the entanglement wedge into portions containing $A$ and $B$ respectively, as depicted in \figref{fig:EW}.
$G_N$ is Newton's constant and in this paper, we will set $\hbar=c=1$ by choosing natural units. 

\begin{figure}[t]
\includegraphics[scale=0.5]{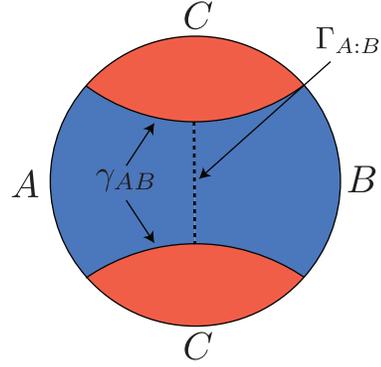}
\caption{The entanglement wedge cross section $\Gamma_{A:B}$ divides the entanglement wedge (blue) into two regions which contain $A$ and $B$ respectively. The RT surface is denoted $\gamma_{AB}$.}\label{fig:EW}
\end{figure}


Proving this AdS/CFT conjecture appears quite challenging.
However, there exists a toy model of AdS, called random tensor networks (RTNs), which have proven useful in discovering new insights into AdS/CFT entanglement properties \cite{Dong:2021clv,Akers:2020pmf,Akers:2021pvd,Akers:2022zxr,akers2023reflected}, especially because of their connection to fixed-area states \cite{Akers:2018fow,Dong:2018seb,Dong:2019piw,Dong:2022ilf}.
The goal of this note is to present progress on proving the conjecture \eqref{eq:conj} in RTNs.

We compute $E_P$ by using a known upper bound and deriving a new lower bound (Theorem~\ref{thm:lower_bound}), which we are able to argue matches the upper bound in certain RTNs.
This argument relies on results obtained previously for the reflected entropy, $S_R(A\colon B)$, in RTNs \cite{Akers:2021pvd,Akers:2022zxr,akers2023reflected}.
The reflected entropy is defined as \cite{Dutta:2019gen}
\begin{equation}
S_R(A\colon B)=S(AA^*)_{\ket{\sqrt{\rho_{AB}}}},
\end{equation}
where the state $\ket{\sqrt{\rho_{AB}}}$ is the canonical purification, which lives in the Hilbert space $\mathrm{End}\(\mathcal{H}_{AB}\)$ of operators acting on $\mathcal{H}_{AB}$.
$\mathrm{End}\(\mathcal{H}_{AB}\)$ is isomorphic to the doubled Hilbert space $\mathcal{H}_{AB}\otimes\mathcal{H}_{A^*B^*}$.

The bounds are as follows.
It is conjectured that the reflected entropy in AdS/CFT satisfies
\begin{equation}
	S_R(A:B) = 2 EW(A:B),
\end{equation} 
and this has been proven rigorously for a large class of RTNs \cite{Akers:2021pvd,Akers:2022zxr,akers2023reflected}, as we will discuss.
Moreover, as argued in \cite{Takayanagi:2017knl}, RTNs in general satisfy
\begin{equation}
	E_P(A:B) \le EW(A:B).
\end{equation}
This places the upper bound $E_P \le S_R / 2$.
The rest of this paper proves the lower bound and discusses when it matches this upper bound.

\section{Reflected Entropy from Modular Operator}
\label{sec:mod}

\begin{defn}
The Renyi reflected entropy is
\begin{align}
    S_R^{(n)}(A\colon B)=S_n(AA^*)_{\ket{\sqrt{\rho_{AB}}}},\label{eq:SRn}
\end{align}
where $S_n(R)=\frac{1}{1-n}\log \tr \left(\rho_R^n\right)$ is the $n$th Renyi entropy. 
\end{defn}

The lower bound in Theorem~\ref{thm:lower_bound} will require the following lemma
that 
rewrites the Renyi reflected entropy using the formalism of modular operators appearing in Tomita-Takesaki theory \footnote{See Ref.~\cite{Witten:2018zxz} for a review.}.
Consider a finite dimensional system with Hilbert space $\mathcal{H}_{AB} \otimes \mathcal{H}_{C}$, where subsystem $C$ is completely general.  Given a state $\ket{\psi}$ \footnote{$\ket{\psi}$ does not need to be cyclic and separating.}  and subsystem $AB$, the modular operator is defined as
\begin{equation}
    \Delta_{AB,\psi} = \rho_{AB}\otimes \rho_{C}^{-1},
\end{equation}
where the inverse is defined to act only on the non-zero subspace of $\rho_{C}$ and $\Delta_{AB,\psi}$ is defined to annihilate the orthogonal subspace. 

\begin{lem}
For integer $n \geq 2$, 
\begin{equation}
	S_R^{(n)}(A\colon B) = \frac{1}{1-n}\log \bra{\psi^{\otimes n}}\Sigma_{A}\Delta_{AB^{\otimes n},\psi^{\otimes n}}^{1/2}\Sigma_{A}^{\dagger}\ket{\psi^{\otimes n}}~,
\end{equation}
where $\Sigma_{A(A^*)}$ are twist operators that cyclically permute the $n$ copies of $\ket{\sqrt{\rho_{AB}}}$ on subregion $A(A^*)$, $\ket{\psi}$ is an arbitrary purification of $\rho_{AB}$, and $\Delta_{AB^{\otimes n},\psi^{\otimes n}}=\Delta_{AB,\psi}^{\otimes n}$.
\end{lem}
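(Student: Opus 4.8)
The plan is to prove the identity in two moves: first rewrite the Renyi reflected entropy as a replica expectation value of twist operators in the canonical purification, and then show that the proposed modular expression reproduces this and is in fact independent of the auxiliary purification $\ket{\psi}$, so that it suffices to verify it on the canonical purification itself.

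First I would recall the standard replica construction \cite{Dutta:2019gen}. Writing $\rho\equiv\rho_{AB}$ and realizing the canonical purification as $\ket{\sqrt{\rho_{AB}}}=(\sqrt{\rho}\otimes I)\ket{\mathrm{MAX}}$, with $\ket{\mathrm{MAX}}=\sum_k\ket{k}_{AB}\ket{k}_{A^*B^*}$ the unnormalized maximally entangled state, the purity of the reduced state on $AA^*$ is
\begin{equation}
\tr\left(\rho_{AA^*}^n\right)=\bra{\sqrt{\rho_{AB}}^{\otimes n}}\Sigma_A\Sigma_{A^*}\ket{\sqrt{\rho_{AB}}^{\otimes n}},
\end{equation}
because the cyclic permutation of the $n$ copies of $AA^*$ factorizes as $\Sigma_A\Sigma_{A^*}$ with the two twists sharing the same orientation. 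Given this, the content of the lemma is the operator identity $\Sigma_{A^*}\ket{\sqrt{\rho_{AB}}^{\otimes n}}=\Delta_{\mathrm{can}}^{1/2}\Sigma_A^{\dagger}\ket{\sqrt{\rho_{AB}}^{\otimes n}}$, where $\Delta_{\mathrm{can}}=\rho_{AB}^{\otimes n}\otimes(\rho_{A^*B^*}^{-1})^{\otimes n}$ is the modular operator of the canonical purification (the $\psi=\sqrt{\rho_{AB}}$ case).

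To prove this identity I would use the transpose (or ``mirror'') trick. Since $\Sigma_{A^*}$ acts only on the commutant factor $A^*$, it commutes through $\sqrt{\rho}^{\otimes n}$ and acts on $\ket{\mathrm{MAX}}^{\otimes n}$; transposing it onto the $A$ side turns it into $\Sigma_A^{\dagger}$ (permutation matrices are real orthogonal, so the transpose is the inverse). Running the same trick on $\Delta_{\mathrm{can}}^{1/2}=\rho_{AB}^{1/2,\otimes n}\otimes\rho_{A^*B^*}^{-1/2,\otimes n}$, and using $\rho_{A^*B^*}=\rho_{AB}^{T}$ in the canonical purification, the factor $\rho_{A^*B^*}^{-1/2}$ transposes to $\rho_{AB}^{-1/2}$ and cancels one power of $\sqrt{\rho}$; a short manipulation then shows both sides equal $(\sqrt{\rho}^{\otimes n}\Sigma_A^{\dagger}\otimes I)\ket{\mathrm{MAX}}^{\otimes n}$, establishing the identity. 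Equivalently, this is the Tomita--Takesaki relation $J\Delta^{1/2}a\ket{\Omega}=a^{\dagger}\ket{\Omega}$ applied to $a=\Sigma_A^{\dagger}$ together with $J\Sigma_A J=\Sigma_{A^*}$ and $J\ket{\Omega}=\ket{\Omega}$.

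It then remains to remove the special choice of purification. I would show that the right-hand side
\begin{equation}
\bra{\psi^{\otimes n}}\Sigma_A\left(\rho_{AB}^{1/2}\otimes\rho_C^{-1/2}\right)^{\otimes n}\Sigma_A^{\dagger}\ket{\psi^{\otimes n}}
\end{equation}
is the same for every purification. Any two purifications are related by a partial isometry $V$ on the purifiers with $\ket{\psi'}=(I_{AB}\otimes V)\ket{\psi}$, $\rho_{C'}=V\rho_C V^{\dagger}$, and $V^{\dagger}V$ the projector onto $\mathrm{supp}(\rho_C)$. Using $(V\rho_C^{-1}V^{\dagger})^{1/2}=V\rho_C^{-1/2}V^{\dagger}$ and the fact that $\Sigma_A,\Sigma_A^\dagger$ commute with everything acting on the purifier, the $V$'s collapse via $V^{\dagger}V\rho_C^{-1/2}V^{\dagger}V=\rho_C^{-1/2}$, leaving the expression unchanged. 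Evaluating it on the canonical purification then reduces it, by the previous step, to $\tr(\rho_{AA^*}^n)$, and taking $\frac{1}{1-n}\log$ of both sides yields the lemma. The main obstacle I anticipate is bookkeeping rather than conceptual: getting the orientation of $\Sigma_{A^*}$ relative to $\Sigma_A^{\dagger}$ right under the transpose/modular conjugation (a wrong orientation would produce $\Sigma_{A^*}^{\dagger}$ and spoil the match with the replica answer), and treating the non-invertible case with care, since $\rho_C^{-1}$ and $\rho_{AB}^{-1/2}$ live only on supports. These support issues are exactly what the modular operator's prescription of annihilating the orthogonal subspace, and the partial-isometry matching of supports in the purification-independence step, are designed to control; once they are handled, the two reductions fit together cleanly.
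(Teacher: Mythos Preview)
Your proposal is correct, but it organizes the argument differently from the paper. Both proofs begin with the twist-operator representation $\tr(\rho_{AA^*}^n)=\bra{\sqrt{\rho_{AB}}^{\otimes n}}\Sigma_A\Sigma_{A^*}\ket{\sqrt{\rho_{AB}}^{\otimes n}}$. From there the paper takes a shorter path: it uses the left/right action on $\mathrm{End}(\mathcal{H}_{AB})$ to collapse this to the $AB$-only trace $\tr_{(AB)^{\otimes n}}\bigl(\sqrt{\rho_{AB}}^{\,\otimes n}\Sigma_A\sqrt{\rho_{AB}}^{\,\otimes n}\Sigma_A^\dagger\bigr)$, and then directly checks that for \emph{any} purification $\ket{\psi}$ the modular expression $\bra{\psi^{\otimes n}}\Sigma_A\Delta^{1/2}\Sigma_A^\dagger\Delta^{-1/2}\ket{\psi^{\otimes n}}$ reproduces this trace, because the $\rho_C$ factors cancel between $\Delta^{1/2}$ and $\Delta^{-1/2}$ and $\Delta^{-1/2}\ket{\psi}=\ket{\psi}$ in the Schmidt basis. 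Purification-independence is thus automatic: the intermediate trace involves only $\rho_{AB}$.

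You instead stay in the purified picture, prove the identity for the canonical purification via the Tomita relation $J\Delta^{1/2}a\ket{\Omega}=a^\dagger\ket{\Omega}$ (equivalently your transpose trick), and then add a separate argument that the modular expression is invariant under changing purifications via a partial isometry on $C$. This is a genuinely different decomposition of the work: it makes the modular-theory origin of the formula more transparent and isolates the purification-independence as a standalone fact, at the cost of an extra step and slightly heavier machinery ($J$, partial isometries, support projectors). The paper's route is more elementary and economical; yours is more conceptual. Either is fine, and your anticipated subtleties about twist orientation and supports are real but correctly handled by the prescriptions you name.
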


\begin{proof}

Start with \Eqref{eq:SRn} and rewrite it as \cite{Dutta:2019gen}
\begin{align}\label{eq:SRntwist1}
    S_R^{(n)}(A\colon B) &=\frac{1}{1-n}\log \tr\(\rho_{AA^*}^n\)
    \\
    \tr\(\rho_{AA^*}^n\) &= \bra{\sqrt{\rho_{AB}}^{\otimes n}}\Sigma_{A}\Sigma_{A^*}\ket{\sqrt{\rho_{AB}}^{\otimes n}}.\label{eq:SRntwist}
\end{align}
As described in Ref.~\cite{Dutta:2019gen}, operators act on $\mathrm{End}\(\mathcal{H}_{AB}\)$ by left and right actions, i.e.,
\begin{align}
    O_{AB}\ket{M_{AB}} &= \ket{O_{AB}\,M_{AB}} \\
    O_{A^*B^*}\ket{M_{AB}} &= \ket{M_{AB}\,O_{AB}^{\dagger}},
\end{align}
and the inner product is defined by
\begin{equation}
    \braket{M|N}=\tr\(M^{\dagger}N\).
\end{equation}
Using this, one finds that \Eqref{eq:SRntwist} is given by
\begin{equation}\label{eq:SRtrace}
     \tr\(\rho_{AA^*}^n\) = \tr_{(AB)^{\otimes n}}\(\sqrt{\rho_{AB}}^{\otimes n}\Sigma_{A}\sqrt{\rho_{AB}}^{\otimes n}\Sigma_{A}^{\dagger}\).
\end{equation}

To express \Eqref{eq:SRtrace} in terms of modular operators, we consider an arbitrary purification of $\rho_{AB}$ denoted $\ket{\psi}$, giving
\begin{equation}
\label{eq:deltaSR}
\begin{split}
    \tr\(\rho_{AA^*}^n\)&= \tr_{(AB)^{\otimes n}}\(\sqrt{\rho_{AB}}^{\otimes n}\Sigma_{A}\sqrt{\rho_{AB}}^{\otimes n}\Sigma_{A}^{\dagger}\)\nonumber \\
    &=\bra{\psi^{\otimes n}}\Sigma_{A}\Delta_{AB^{\otimes n},\psi^{\otimes n}}^{1/2}\Sigma_{A}^{\dagger}\Delta_{AB^{\otimes n},\psi^{\otimes n}}^{-1/2}\ket{\psi^{\otimes n}}\nonumber\\
    &=\bra{\psi^{\otimes n}}\Sigma_{A}\Delta_{AB^{\otimes n},\psi^{\otimes n}}^{1/2}\Sigma_{A}^{\dagger}\ket{\psi^{\otimes n}},
\end{split}
\end{equation}
where we have used the fact that the $\rho_C$ dependence cancels out in the second line. 
For the last line, we have used $\Delta_{AB,\psi}^{-1/2}\ket{\psi}=\ket{\psi}$ which is easy to see by working in the Schmidt basis. 
\end{proof}

\section{Lower bound}
\label{sec:ineq}

\begin{thm}\label{thm:lower_bound}
For integer $n \ge 2$,
\begin{equation}\label{eq:ineq_main}
	E_P(A:B) \ge S^{(n)}_R(A:B) / 2.
\end{equation}
\end{thm}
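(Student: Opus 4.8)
The plan is to prove the stronger pointwise statement that \emph{every} purification $\ket{\psi}_{ABA'B'}$ of $\rho_{AB}$ satisfies $S_R^{(n)}(A\colon B)\le 2S_n(AA')_\psi$, and then to specialize. Because $n\ge 2$ forces $S_n(AA')_\psi\le S_1(AA')_\psi=S(AA')_\psi$, evaluating this pointwise bound on the optimal purification (the minimizer in the definition of $E_P$) yields $S_R^{(n)}(A\colon B)\le 2S(AA')=2E_P$, which is \Eqref{eq:ineq_main}. This relies on the purification-independence of the right-hand side guaranteed by the Lemma.

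For the pointwise bound I would start from the Lemma, writing $S_R^{(n)}(A\colon B)=\frac{1}{1-n}\log Z$ with $Z=\bra{\psi^{\otimes n}}\Sigma_A\Delta^{1/2}\Sigma_A^{\dagger}\ket{\psi^{\otimes n}}$ and $\Delta\equiv\Delta_{AB^{\otimes n},\psi^{\otimes n}}$, evaluated on this same $\ket{\psi}$. Since $\tfrac{1}{1-n}<0$, the claim $S_R^{(n)}\le 2S_n(AA')$ is equivalent to the lower bound $Z\ge(\tr\rho_{AA'}^n)^2$. I would obtain this from the Cauchy--Schwarz inequality $|\bra{a}\Delta^{1/2}\ket{b}|^2\le\bra{a}\Delta^{1/2}\ket{a}\,\bra{b}\Delta^{1/2}\ket{b}$ for the positive operator $\Delta^{1/2}$. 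Taking $\ket{a}=\Sigma_A^{\dagger}\ket{\psi^{\otimes n}}$, so that $\bra{a}\Delta^{1/2}\ket{a}=Z$, and $\ket{b}=\Delta^{-1/2}\ket{w}$ for a vector $\ket{w}$ to be chosen, this rearranges to
\begin{equation}
Z\ \ge\ \frac{\big|\bra{\psi^{\otimes n}}\Sigma_A\ket{w}\big|^2}{\bra{w}\Delta^{-1/2}\ket{w}}.
\end{equation}

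The decisive choice is $\ket{w}=\Sigma_{A'}\ket{\psi^{\otimes n}}$, the cyclic twist acting on the \emph{complementary} ancilla $A'$. The numerator then becomes $|\bra{\psi^{\otimes n}}\Sigma_A\Sigma_{A'}\ket{\psi^{\otimes n}}|^2=(\tr\rho_{AA'}^n)^2$, using $\Sigma_A\Sigma_{A'}=\Sigma_{AA'}$ together with the replica identity $\bra{\psi^{\otimes n}}\Sigma_{AA'}\ket{\psi^{\otimes n}}=\tr\rho_{AA'}^n$. For the denominator I would exploit the complementarity of modular operators: since $\rho_C=\rho_{A'B'}$, one has $\Delta_{AB^{\otimes n},\psi^{\otimes n}}^{-1}=\Delta_{A'B'^{\otimes n},\psi^{\otimes n}}$, so that $\Delta^{-1/2}=\Delta_{A'B'^{\otimes n},\psi^{\otimes n}}^{1/2}$ and the denominator $\bra{\psi^{\otimes n}}\Sigma_{A'}^{\dagger}\Delta_{A'B'^{\otimes n},\psi^{\otimes n}}^{1/2}\Sigma_{A'}\ket{\psi^{\otimes n}}$ is \emph{exactly} the quantity the Lemma evaluates for the bipartition $(A'\colon B')$ of $\rho_{A'B'}$ (the orientation of the twist being immaterial), namely $\tr\rho_{A'A'^*}^n$. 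As the trace of the $n$th power of a density matrix with $n\ge 2$, this is at most $1$. Hence $Z\ge(\tr\rho_{AA'}^n)^2$, establishing the pointwise bound.

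I expect the denominator to be the main obstacle. Producing \emph{any} useful lower bound on $Z$ forces $\ket{w}$ to carry a twist on $A'$ so that the numerator reproduces an entropy of $AA'$; but a generic such $\ket{w}$ has $\bra{w}\Delta^{-1/2}\ket{w}>1$, which would ruin the bound. What rescues the argument is precisely the identity $\Delta_{AB,\psi}^{-1}=\Delta_{A'B',\psi}$, which reinterprets the denominator as a Renyi reflected entropy of the complementary regions and thereby caps it at unity. Beyond this, the remaining steps are routine: the replica evaluations of numerator and denominator, and care that $\Delta^{-1/2}$ is taken on the support of $\rho_{AB}$ in accordance with the kernel prescription defining $\Delta$ (handled either by assuming full rank or by a continuity argument), so that the manipulation $\ket{b}=\Delta^{-1/2}\ket{w}$ in the Cauchy--Schwarz step is legitimate.
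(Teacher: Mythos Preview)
Your argument is correct and is essentially the same proof as the paper's, just written from the other side of the Cauchy--Schwarz inequality. The paper starts from $\tr\(\rho_{AA'}^n\)=\bra{\psi^{\otimes n}}\Sigma_A\Delta^{1/4}\Delta^{-1/4}\Sigma_{A'}\ket{\psi^{\otimes n}}$ and applies ordinary Cauchy--Schwarz to the vectors $\Delta^{1/4}\Sigma_A^\dagger\ket{\psi^{\otimes n}}$ and $\Delta^{-1/4}\Sigma_{A'}\ket{\psi^{\otimes n}}$; your ``positive-operator'' Cauchy--Schwarz with $\ket{a}=\Sigma_A^\dagger\ket{\psi^{\otimes n}}$ and $\ket{b}=\Delta^{-1/2}\Sigma_{A'}\ket{\psi^{\otimes n}}$ is literally the same inequality with the same two vectors, and both proofs then invoke $\Delta_{AB,\psi}^{-1}=\Delta_{A'B',\psi}$ and the Lemma to identify the second factor as $\tr\(\rho_{A'A'^*}^n\)$. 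The only cosmetic difference is that the paper records the intermediate two-sided bound $2S_n(AA')\ge S_R^{(n)}(A\colon B)+S_R^{(n)}(A'\colon B')$ before dropping the last term, whereas you bound the denominator by $1$ directly.
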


\begin{rem}
In Ref.~\cite{Dutta:2019gen}, it was proven that for integer $n \ge 2$, the Renyi reflected entropy is monotonic under partial trace, i.e., $S_R^{(n)}(A\colon  BC)\geq S_R^{(n)},(A\colon  B)$. 
This immediately implies Theorem \ref{thm:lower_bound} by the following argument.
Let $\ket{\psi}_{ABA'B'}$ be the optimal purification.
Then
\begin{equation}
    2S(AA')\geq 2S_n(AA')=S_R^{(n)}(AA'\colon  BB')\geq  S_R^{(n)}(A\colon  B),
\end{equation}
where we have used the fact that $S^{(n)}_R(C:D)=2S_n(C)$ for a pure state on $CD$.
That said, we choose to present the proof below because it is self-contained and far simpler than the proof of monotonicity in Ref.~\cite{Dutta:2019gen}.
\end{rem}

\begin{proof}[Proof of Theorem \ref{thm:lower_bound}]

We first define the Renyi generalization of $E_P(A\colon B)$ as
\begin{align}\label{eq:EPn}
    E_P^{(n)}(A\colon B) = \min_{\ket{\psi}_{ABA'B'}} S_n(AA').
\end{align}
Applying the monotonicity of Renyi entropy, i.e., $\partial_n S_n \leq 0$, for $n>1$ we have 
\begin{equation}\label{eq:renyiEP}
E_P(A\colon B)\geq E_P^{(n)}(A\colon B).
\end{equation}

Now consider an arbitrary purification $\ket{\psi}_{ABA'B'}$. For integer $n\geq 2$, the Renyi entropy for subregion $AA'$ can be computed using twist operators in a fashion similar to Eqs.~(\ref{eq:SRntwist1},\ref{eq:SRntwist}), i.e., 
\begin{align}
    S_n(AA')&=\frac{1}{1-n}\log \tr\(\rho_{AA'}^n\)\\
    \tr\(\rho_{AA'}^n\) &=\bra{\psi^{\otimes n}}\Sigma_{A}\Sigma_{A'}\ket{\psi^{\otimes n}}.\label{eq:twist}
\end{align}

Define the operators $\Pi_{AB,\psi}$ ($\Pi_{A'B',\psi}$) to be projectors onto the non-zero subspaces of the reduced density matrices on $AB$ ($A'B'$). Then, using $\Pi_{AB,\psi}\ket{\psi}=\Pi_{A'B',\psi}\ket{\psi}=\ket{\psi}$, we can insert $\Pi_{AB,\psi}$ ($\Pi_{A'B',\psi}$) from the right (left) in \Eqref{eq:twist} for each of the $n$ copies of $\ket{\psi}$. Note that $\Pi_{AB}\Pi_{A'B'}=\Delta_{AB,\psi}^{1/4}\Delta_{AB,\psi}^{-1/4}$ as the inverse density matrices in the modular operators annihilate the orthogonal subspaces. We can use this fact to insert a pair of modular operators into \Eqref{eq:twist} to get
\begin{equation}
\begin{split}
\tr\left(\rho_{AA'}^n\right)&=\bra{\psi^{\otimes n}}\Sigma_{A}\(\Delta_{AB,\psi}^{1/4}\Delta_{AB,\psi}^{-1/4}\)^{\otimes n}\Sigma_{A'}\ket{\psi^{\otimes n}}
    \\&\leq \left(\bra{\psi^{\otimes n}}\Sigma_{A}\Delta_{AB^{\otimes n},\psi^{\otimes n}}^{1/2}\Sigma_{A}^\dagger \ket{\psi^{\otimes n}}\right.\\
    &~~~~\left.\bra{\psi^{\otimes n}}\Sigma_{A'}\Delta_{AB^{\otimes n},\psi^{\otimes n}}^{-1/2}\Sigma_{A'}^\dagger\ket{\psi^{\otimes n}}\right)^{\frac{1}{2}},
\end{split}\label{eq:ineq}
\end{equation}
where we have applied the Cauchy-Schwarz inequality between the modular operators. 

Using $\Delta_{AB,\psi}^{-1}=\Delta_{A'B',\psi}$ and \Eqref{eq:deltaSR}, the two terms in the last line of \Eqref{eq:ineq} can be related to Renyi reflected entropies on $A:B$ and $A':B'$ respectively. Thus, we have
\begin{equation}\label{eq:ineq1}
    2\frac{1}{1-n} \log \tr\left(\rho_{AA'}^n\right) \geq S_R^{(n)}(A\colon  B) +S_R^{(n)}(A'\colon  B').
\end{equation}
Finally using the fact that $S_R^{(n)}(A'\colon  B')\geq 0$, applying \Eqref{eq:ineq1} to the optimal purification arising in the calculation of $E_P^{(n)}(A\colon  B)$ and using \Eqref{eq:renyiEP}, we have our desired inequality.
\end{proof}

\begin{rem}
    We will use the inequality at $n=2$ since it is the strongest.
\end{rem}

\begin{rem}
It is important to note that this inequality was derived using twist operators which only exist at integer $n$. 
In the context of computing entanglement entropy, one usually analytically continues the answer obtained at integer $n$ to non-integer values using Carlson's theorem. 
However, it is not necessarily possible to analytically continue an inequality. 
For example, the monotonicity of Renyi reflected entropy under partial trace, i.e., $S_R^{(n)}(A\colon  BC)\geq S_R^{(n)},(A\colon  B)$ was proved to be true at integer $n$ \cite{Dutta:2019gen}, whereas counterexamples were found for non-integer $n$ in Ref.~\cite{Hayden:2023yij}.
\end{rem}

\section{Random Tensor Networks}
\label{sec:SR}

We can now use these bounds to compute $E_P$ in many random tensor network states.
These states are defined as (up to normalization) \cite{Hayden:2016cfa}
\begin{equation}
    \ket{\psi}=\(\prod_{<xy>\in E}\bra{xy}\)\(\prod_{x\in V} \ket{V_x}\),
\end{equation}
where we are considering an arbitrary graph defined by vertices $V$ and edges $E$. 
The states $\ket{V_x}$ are Haar random and the states $\ket{xy}$ are maximally entangled. 
This defines a state on the vertices living at the boundary of the graph. 
We will consider RTNs in the simplifying limit where all bond dimensions $\chi_{xy}$ are large such that $\log \chi_{xy} \propto \log D$ and $D\to \infty$ \footnote{$\log D\sim \frac{1}{4G_N}$ in AdS/CFT in units where $l_{AdS}=1$.}.

\begin{figure}[t]
\includegraphics[scale=0.5]{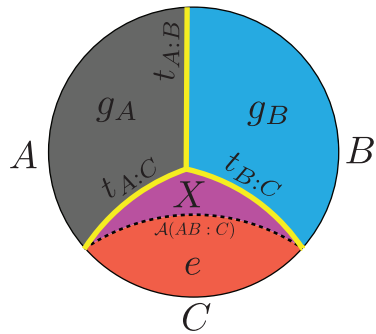}
\caption{The triway cut (yellow) minimizes the energy cost of the domain walls with tensions $t_{A:B}=1$ and $t_{A:C}=t_{B:C}=\frac{n}{2(n-1)}$. For $n>1$, it lies within the entanglement wedge of $AB$ defined by the RT surface denoted $\mathcal{A}(AB:C)$. The optimal configuration corresponds to domains of permutation elements $\{e,g_A,g_B,X\}$ as shown.}\label{fig:triway}
\end{figure}

For RTN states, the Renyi reflected entropy is computed by finding the optimal configuration of permutations that minimizes a certain free energy (see Ref.~\cite{akers2023reflected} for details). 
It was proved in Ref.~\cite{akers2023reflected} that the optimal configuration involves four permutation elements $\{e,g_A,g_B,X\}$ and takes the general form shown in \figref{fig:triway}.
In detail, we have
\begin{equation}\label{eq:triway}
    \lim_{D\to \infty}\frac{S_R^{(n)}(A\colon B)}{\log D} = 2 \mathcal{A}_n(A\colon  B\colon  C) - \frac{n}{n-1} \mathcal{A}(AB\colon  C),
\end{equation}
where $\mathcal{A}_n(A\colon  B\colon  C)$ is the triway cut with tensions $t_{A:B}=1$ and $t_{A:C}=t_{B:C}=\frac{n}{2(n-1)}$ (see \figref{fig:triway}). $\mathcal{A}(AB\colon  C)$ is the minimal cut separating $AB$ from $C$.

While the triway cut problem provides a natural analytic continuation in $n$ and Refs.~\cite{Akers:2021pvd,Akers:2022zxr} have provided evidence that this in fact is the correct prescription, it is not necessary to assume this for the purpose of this paper. 
For now we note that at $n=2$, all the tensions are equal and normalized to 1.
On the other hand, in the limit $n\to 1$, the RHS of \Eqref{eq:triway} approaches $2EW(A:B)$.  

Now, the key point is that there exist networks where the triway cut configuration is identical for $n\to1$ and $n=2$. 
This corresponds to networks where the $X$ region in \figref{fig:triway} vanishes at $n=2$. 
We will demonstrate such examples in \secref{sec:eg}. 
For now, assuming such a network and using \Eqref{eq:ineq_main}, we have 
\begin{equation}
    E_P(A\colon  B)\geq \frac{1}{2}S_R^{(2)}(A\colon  B) = EW(A\colon  B).
\end{equation}

To prove the opposite inequality, we repeat the arguments made in Refs.~\cite{Takayanagi:2017knl,Nguyen:2017yqw}. 
There is an approximate isometry relating the RTN state $\ket{\psi}_{ABC}$ to the state $\ket{\psi}_{ABC'}$ defined on the same graph truncated to the entanglement wedge of $AB$, with $C'=\gamma_{AB}$. 
The RT formula can still be applied and optimizing over the choice of decomposition $C=A'\cup B'$, we have $S(AA')=EW(A\colon  B)$. 
Since we have found one such purification, we have
\begin{equation}
    E_P(A\colon  B)\leq EW(A\colon  B)
\end{equation}
Note that each of the above inequalities is in the $D\to \infty$ limit. 
Combining these two inequalities, we have $E_P(A\colon  B)=EW(A\colon  B)$ up to terms vanishing in the $D\to \infty$ limit. 
It is then also clear that the geometric purification in Refs.~\cite{Takayanagi:2017knl,Nguyen:2017yqw} is the optimal purification to leading order in $D$.

\section{Examples}
\label{sec:eg}

In this section, we provide simple examples of RTNs to demonstrate regions of parameter space where we have proved $E_P(A\colon B)=EW(A\colon B)$. 
While in the continuum limit one generically expects a non-trivial $X$ region as shown in \figref{fig:triway}, for any discrete network we expect a codimension-$0$ region of parameter space where the $X$ region vanishes.

\subsection{1TN}
\label{sub:1TN}
\begin{figure}[ht]
\includegraphics[scale=0.35]{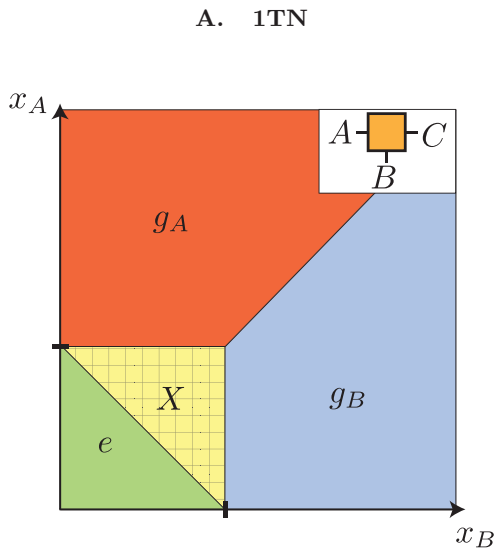}
\caption{The 1TN model (top right), along with its phase diagram labelled by the dominant permutation element in each domain. We have proved $E_P(A\colon B)=EW(A\colon B)$ everywhere except the region marked with squares. \label{fig:1TN}}
\end{figure}

The first example we consider is that of a Haar random tripartite state, represented by a graph with a single vertex and three legs with bond dimensions $d_{A/B/C}$ respectively (see \figref{fig:1TN}).
In this case, the reflected entropy was computed in detail in Ref.~\cite{Akers:2021pvd}. 
We present the phase diagram in \figref{fig:1TN}. 
The phase boundaries at $n=2$ are represented as a function of $x_A=\frac{\log d_A}{\log d_C}$ and $x_B=\frac{\log d_B}{\log d_C}$. 
Apart from the shaded region marking the $X$ domain, we have proved $E_P(A\colon  B)=EW(A\colon  B)$ everywhere else. 
It is also straightforward to read off the optimal purification since we already argued it is given by the geometric purification suggested in Ref.~\cite{Takayanagi:2017knl,Nguyen:2017yqw}.

One may consider a simple deformation of the above model, by changing the maximally entangled legs of the RTN to non-maximally entangled legs. 
Such states have also been useful to model holographic states \cite{Cheng:2022ori}.
In fact, the simplest situation where we add non-maximal entanglement to the $C$ leg results in a state identical to the PSSY model of black hole evaporation \cite{Penington:2019kki}. 
We can thus use the results of Ref.~\cite{Akers:2022max} which computed the reflected entropy in this model. 
The phase diagram turns out to be similar to \figref{fig:1TN} except the shaded region turns out to be larger. 
Thus, non-maximal links do not help in improving the applicability of our result.
We provide some more details on this in Appendix~\ref{app:non_max}.

\subsection{2TN}

\begin{figure}[ht]
\includegraphics[scale=0.35]{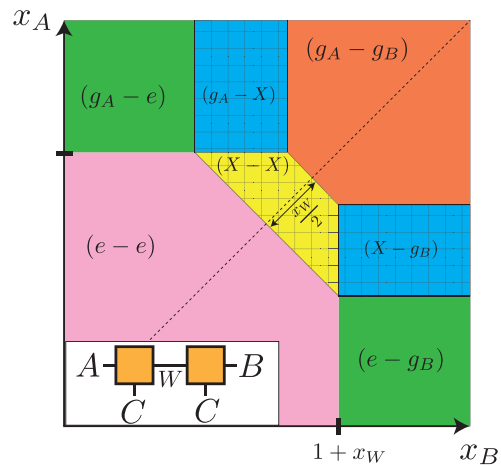}
\caption{The 2TN model (bottom left) and its phase diagram labelled by the dominant permutation element tuple. We have proved $E_P(A\colon B)=EW(A\colon B)$ everywhere except the region marked with squares.}\label{fig:2TN}
\end{figure}

The next simplest network to consider is one where we have two vertices connected by an internal bond labelled $W$ as shown in \figref{fig:2TN}.
For simplicity, the external $C$ bonds are chosen to have identical bond dimension. 

In general, we have the phase diagram shown in \figref{fig:2TN}. 
Again, we see a large codimension-$0$ region of parameter space where our proof applies. 
In fact, motivated by holography, Ref.~\cite{Akers:2022zxr} considered a limit where $x_W = \frac{\log d_W}{\log d_C}\to 0$. 
In this limit, the shaded domains containing the element $X$ vanish at arbitrary $n$. 
Thus, our proof always applies in this limit.

\section{Discussion}
\label{sec:disc}

In this note we have proven $E_P=EW$ for a large class of RTNs. 
Our result relied on the inequality $E_P\geq \frac{1}{2}S_R^{(2)}$ proven as Theorem \ref{thm:lower_bound}. 

Proving the stronger inequality $E_P\geq \frac{1}{2}S_R$ would prove $E_P=EW$ more generally, but this cannot be achieved with our proof technique. 
It would be interesting to check this numerically using the techniques of Ref.~\cite{hauschild2018finding}.

\begin{figure}[t]
\includegraphics[scale=0.35]{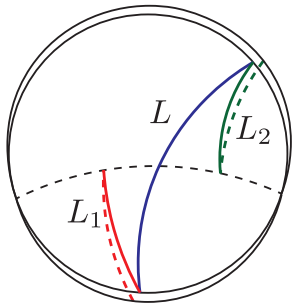}
\caption{A correlation function computed by the geodesic of length $L$ (solid blue) can be compared to the mirror correlation functions analogous to \Eqref{eq:ineq} computed by mirror geodesics (red and green) meeting the RT surface (dashed) orthogonally. Each of the mirror geodesics with length $L_{1,2}$ involves two copies (dashed and solid). It is then clear that $L\geq \frac{L_1+L_2}{2}$.}\label{fig:geom}
\end{figure}

An inequality of the form of \Eqref{eq:ineq} can in fact be proved for heavy local operators in AdS/CFT by using the geodesic approximation and the techniques of computing mirror correlation functions \cite{Faulkner:2018faa} (see \figref{fig:geom}).
In AdS$_3$/CFT$_2$, twist operators are local and can be analytically continued to $n\approx1$. 
Applying the inequality, we would then find $S(AA')\geq \frac{1}{2}S_R(A:B)+\frac{1}{2}S_R(A':B')$ in any geometric purification. 
It would be interesting if this argument can be generalized to non-geometric states, so that we can minimize the LHS and find the strengthened inequality.

\begin{acknowledgments}
PR is supported in part by a grant from the Simons Foundation, and by funds from UCSB. 
CA is supported by the Simons foundation as a member of the It from Qubit collaboration, the NSF grant no. PHY-2011905, and the John Templeton Foundation via the Black Hole Initiative. This material is based upon work supported by the Air Force Office of Scientific Research under award number FA9550-19-1-0360. 
\end{acknowledgments}

\appendix
\section{Non-maximally Entangled RTNs}
\label{app:non_max}

In a standard RTN, the edges are projected onto maximally entangled states. 
These RTN states can be deformed to nearby states by simply changing the entanglement spectrum on the edges. 
One may then ask whether we can prove $E_P=EW$ for a larger class of states by considering such a deformation, and attempting to enlarge the parameter space where the inequality in Theorem~\ref{thm:lower_bound} is saturated.
It turns out the answer is no, and we give an example in this section to highlight the basic issue.

Consider the 1TN model of \secref{sub:1TN} with a non-maximally entangled leg for subregion $C$. 
This state, for a specific choice of spectrum, is identical to that of the PSSY model, an evaporating black hole in JT gravity coupled to end-of-the-world branes with flavour indices entangled with a radiation system \cite{Penington:2019kki}. 
Here, we will not restrict to the PSSY spectrum, and find more generally how this deformation affects the phase diagram of reflected entropy. 

For generality, consider the state $\ket{\rho_{AB}^{m/2}}$, a one parameter generalization of the canonical purification. 
Ref.~\cite{Akers:2022max} computed the entanglement spectrum of $\rho_{AA^*}$ for this state. 
It consists of two features: a single pole of weight $p_d(m)$ and a mound of $\min(d_A^2-1,d_B^2-1)$ eigenvalues with weight $p_c(m)$. 
The weights are given by
\begin{align}\label{eq:pd}
    p_d(m)&= \frac{\tr(\rho_{AB}^{m/2})^2}{d_A d_B \tr(\rho_{AB}^{m})}\\
    p_c(m) &= 1-p_d(m).
\end{align}

Now, we would like to compare the phase diagram of this model with the standard 1TN with maximally entangled legs. 
First note that the transition between $e$ and $X$ in \figref{fig:1TN} is dictated by the location of the entanglement wedge phase transition, which we hold fixed to compare the two models.
Then the remaining question is where the transition from $X$ to $g_A/g_B$ happens.

Consider the region of the phase diagram where $d_A>d_B$.
The transition happens in the connected sector. 
Thus, we have $p_c(m)\approx 1$ and the spectrum of $\rho_{AB}$ is well approximated by the spectrum on the $C$ leg.
Using this, we find that the location of the transition for $S_R^{(2)}$ is given by
\begin{align}
    p_d(m)&=\frac{1}{d_B}.
\end{align}
Using \Eqref{eq:pd}, we then have
\begin{equation}
    (2-m)S_{m/2}-(1-m)S_m=\log d_A,
\end{equation}
where $S_n$ is the $n$th Renyi entropy of the non-maximal spectrum on the $C$ leg.

Then it is clear that at $m=1$, the location of the phase transition is
$x_A = \frac{S_{1/2}}{S_1}\geq 1$. The standard 1TN has a flat spectrum, i.e, $S_n=S_1$ and the transition is at $x_A=1$.
Thus, the shaded region where we cannot prove $E_P=EW$ is larger after deforming the RTN to add non-maximally entangled legs.

As a side note, we would like to mention what happens for $m\geq 2$ where one can use the usual RTN calculation of domain walls with tensions modified by the entanglement spectrum, thus introducing an $m$ dependence \cite{Dong:2021clv,Cheng:2022ori}.
For $m\geq 2$, we have
$x_A = \frac{S_{m/2}}{S_1}-(m-1)\frac{S_{m/2}-S_m}{S_1}\leq 1$ since $S_{m}\leq S_{m/2}\leq S_1$.
Thus, the $X$ region shrinks for $m\geq 2$ after deforming the spectrum on the legs. 
However, as demonstrated above for $m=1$, the naive analytic continuation of the result at $m\geq 2$ fails.

\bibliography{apssamp}

\end{document}